\newtheorem{proposition}{Proposition}
\begin{document}
\title{ Slow decay rate of correlations induced by long-range extended Dzyaloshinskii-Moriya interactions}

\author{Tanoy Kanti Konar$^{1}$, Leela Ganesh Chandra Lakkaraju$^{1,2,3}$, Aditi Sen(De)$^{1}$}
\affiliation{$^1$ Harish-Chandra Research Institute, A CI of Homi Bhabha National Institute, Chhatnag Road, Jhunsi, Allahabad - 211019, India}
\affiliation{$^2$ Pitaevskii BEC Center and Department of Physics, University of Trento, Via Sommarive 14, I-38123 Trento, Italy }
\affiliation{{$^3$} INFN-TIFPA, Trento Institute for Fundamental Physics and Applications, Trento, Italy}

\begin{abstract}
    We examine the impact of long-range  Dzyaloshinskii-Moriya (DM) interaction in the extended $XY$ model on the phase diagram as well as the static and dynamical properties of quantum and classical correlation functions. It is known that in the nearest-neighbor $XY$ model with DM interaction, the transition from the gapless chiral phase to a gapped one occurs when the strengths of the DM interaction and anisotropy coincide. We exhibit that the critical line gets modified with the range of interactions which decay according to power-law.   Specifically, instead of being gapless in the presence of a strong DM interaction,  a gapped region emerges which grows with the increase of the moderate fall-off rate (quasi-long range regime) in the presence of a transverse magnetic field. The gapless chiral phase can also be separated from a gapped one by the decay patterns of quantum mutual information and classical correlation with distant sites of the ground state which are independent of the fall-off rate in the gapless zone. We observe that the corresponding critical lines that depend on the fall-off rate can also be determined from the effective central charge involved in the scaling of entanglement entropy.  We illustrate that in a non-equilibrium setting, the relaxation dynamics of classical correlation,  the decay rate of total correlation, and the growth rate of entanglement entropy can be employed to uncover whether the evolving Hamiltonian and the Hamiltonian corresponding to the initial state are gapped or gapless. 
\end{abstract}
\maketitle


\section{Introduction}

Long-range order across quantum systems is desirable for a variety of quantum technological tasks, as it facilitates the sharing of correlation among distant parts of the system. Historically, such long-range order has been observed in many-body systems at quantum criticality or when the spectrum becomes gapless. In the case of the nearest-neighbor (NN) $XY$ model with nonvanishing anisotropy parameter, it has been demonstrated that the classical correlations decay with the distance between the spins  exponentially (polynomially) when it is away from (at) the criticality.
Beyond NN models, long-range interacting (LR) systems with power-law decay in the range of interactions have been shown to exhibit polynomial decay of correlations, even when the spectrum is not gapless. These systems become an intense topic of research in the last few years, especially in the context of building analog quantum simulator\cite{diessel_prr_2023, Defenu2024Jun, defenu_prl_2018, defenu_prb_2019, solfanelli_arxiv_2024, defenu_arxiv_2024} and due to their natural occurrence in several physical platforms such as the Rydberg atom arrays \cite{rydberg_review_experiments}, dipolar systems \cite{dipolar_longrange}, polar molecules \cite{cold_gas_long_range_review}, trapped-ion setups \cite{trapped_ion_1, trapped_ion_2, trapped_ion_3}, and cold atoms in cavities \cite{cold_atom_cavity_long_range_1, cold_atom_cavity_long_range_2}. Moreover, several established results that hold for nearest-neighbor systems came into the light of investigation in these LR systems in terms of Lieb-Robinson bound \cite{PhysRevB.93.125128}, area law \cite{area_law_longrange}, novel phases of matter  \cite{phases_longrange},  dynamical phases \cite{uhrich_prb_2020} and from the perspectives of usefulness in quantum technologies like  quantum metrology \cite{monika2023bettersensingvariablerangeinteractions} and quantum computation \cite{ghosh2023entanglementweightedgraphsuncovers}. 


The nearest-neighbor Dzyaloshinskii-Moriya (DM) interaction, on the other hand,  involves an asymmetric exchange of spins caused by spin-orbit coupling \cite{DZYALOSHINSKY1958241, moriya_1960, moriya_prl_1960}.
It has been widely explored in a variety of solid-state compounds, including \( \text{Cu(C}_6\text{D}_5\text{COO})_2. 3\text{D}_2\text{O} \) \cite{comp_1, comp_2}, \( \text{Yb}_4\text{As}_3 \)\cite{comp_3,comp_4}, \(\text{Crl}_3\) \cite{jafari_experimental} etc, exhibiting fascinating magnetic properties, like gapless chiral phase when the DM interaction strength is stronger than the anisotropy parameter in the transverse Ising \cite{jafri_prb_2008, GR_phase_DM},  $XY$ \cite{jafari_XY_entanglement, roy_prb_2019, Zhong_2013} and the gamma model \cite{quantum_phase_DM_2}. It was also shown that the DM interaction can host a ground state in the gapless phase that contains logarithmic entanglement behavior, similar to the LR interacting systems which can enhance the fidelity in quantum teleportation \cite{DM_teleportation}, thermal entanglement \cite{roy_prb_2019, thermal_entanglement, jafari_thermal_gamma, jafari_thermal_heisenberg} and performance in quantum engines \cite{heat_engine_DM_1, heat_engine_DM_2}.  
In the non-equilibrium domain, the DM interaction has also been examined in the context of non-equilibrium thermodynamics \cite{nonequi_thermo_DM}, dynamical quantum phase transitions \cite{nonequi_DM}, topological phase transition \cite{jafari_topolpgy_DM} and quantum speed limit \cite{qsl_DM}. 

 In this paper, we explore the long-range extended $XY$ model in the presence of long-range extended DM interactions, which decay according to a power-law.  Specifically, we aim to analyze the responses of both long-range and DM interactions in equilibrium and non-equilibrium physics. In the former scenario, we present the phase-classification utilizing conventional order parameters, the law for the decreasing slopes of classical and quantum correlations, and the scaling of entanglement entropy (EE) with the corresponding central charge. In contrast to the NN model \cite{roy_prb_2019}, we show that given a fixed fall-off rates, there exists a gapped region in which the DM interaction is stronger than the anisotropy parameter. However, we establish that whenever the system is gapless, it is chiral. 
 Notably, in the static case, when the ground state belongs to the chiral gapless region, the classical correlation and quantum mutual information decay with constant decreasing exponent while this is not the case in the gapped regime.   In this model, we observe that the effective central charge associated with EE exhibits interesting behaviors: in the presence of LR and DM interactions, the central charge acts nonlinearly with multiple kinks that are absent from the extended Ising model without DM interactions.

  In comparison to the equilibrium scenario, the dynamical phases could not be efficiently characterized as they depend on the  initial state, evolving operator,  time and system parameters. 
   In order to distinguish between the gapless and gapped phases, we resort to a recently proposed relaxation dynamics of two-point classical correlations \cite{dyn_relax_floquet_2016, Nandy_2018, dyn_relax_2020, dyn_relax_2022} and entanglement entropy of the evolved state in the transient regimes, as well as mutual information and classical correlation in the steady state domains
  (cf. \cite{oscillation_dyn_phases_2017, DELFINO2022115643, ent_oscill_2020, heyl_2013, Heyl_2018, makki_prb_2022, Sirker_XXZ_2023, lakkaraju2023frameworkdynamicaltransitionslongrange} for other dynamical quantities). In our case, the initial state is  prepared by tuning the parameters of the Hamiltonian which have either gapless or gapped energy spectra while  evolution happens when the range of interactions is suddenly quenched, resulting in a gapless or gapped Hamiltonian. Specifically, in the transient  regime,  we show that the long-range DM interactions modify the exponent in the relaxation time of the two-point classical correlation between modes  and establish a generic  scaling law for the growth of entanglement entropy over time. Further, we  observe that  the behavior of mutual information and classical correlation between two arbitrary sites in the steady state differs depending on whether the Hamiltonians corresponding to the initial and dynamical states are in the same (gapped or gapless) phase or in different phases.

The paper is organized as follows. In Sec. \ref{sec:model}, we introduce the long-range extended $XY$ model with an additional long-range DM term and find the energy spectra and eigenvectors. While 
we identify the conventional phases in Sec. \ref{sec:gapless_chiral}, we indicate the decay of classical and quantum correlations in Sec. \ref{sec:static_corre_decay}. The behavior of  entanglement entropy and the corresponding central charge with respect to system parameters are studied in Sec. \ref{sec:entropy_charge}. We then move on to the investigation of features in dynamical states, namely the relaxation dynamics of classical correlations, and the rate of entropy growth in the transient regime in Secs. \ref{sec:dynamics} and  \ref{sec:dynamics_entropy} respectively while we examine the decay of mutual information with distance between the spins  in the steady state  in Sec. \ref{sec:steady_state_corre}. The results are summarized  in Sec. \ref{sec:conclusion}.

\section{Emergent gapped phase with strong DM interactions: Phase diagram}
\label{sec:model}

The  critical lines  for the long-range interacting $XY$ spin models in the presence of Coloumb-like interaction, described by a parameter \(\alpha\) and for the nearest-neighbor transverse $XY$ model with DM interaction  are known in the literature \cite{roy_prb_2019}. However,  the effects of long-range DM interactions has never been addressed before. In this situation, a possible query can be  -- can the introduction of long-range DM interactions change the critical lines and phases of the LR Ising models? We answer this question affirmatively by first presenting the analytical prescription of this model for obtaining energy spectra and the eigenvectors, thereby leading to the modified phase-portrait of the extended long-range $XY$ model \cite{sadhukhan_prb_2020, sinha_prb_2020} having long-range extended Dzyaloshinskii-Moriya interaction both in the thermodynamic limit and for finite systems.  In particular, we report the gapless chiral phase and its dependence on the fall-off rate, \(\alpha\). In addition, we identify appropriate order parameters that can reveal \(\alpha\)-dependence in the modified magentic phase diagram. From the decay patterns of classical correlation including Landau parameters and quantum correlations in terms of quantum mutual information and block entanglement entropy, we again separate the gapless region from the gapped one in this model. 

\subsection{Energy spectrum of the long-range XY and DM interactions}

The Hamiltonian, describing long-range extended XY model having long-range extended DM interaction in the presence of transverse magnetic field, reads as
\begin{eqnarray}
    \nonumber H&=&\sum_{j=1}^{N} \sum_{r=1}^{\frac{N}{2}} -\frac{J_r^\prime}{\mathcal{N}} \Bigg [\frac{1+\gamma}{4}\sigma_j^x\mathbb{Z}_r^z\sigma_{j+r}^x\nonumber+\frac{1-\gamma}{4}\sigma_j^y\mathbb{Z}_r^z\sigma_{j+r}^y\nonumber\\&&+\frac{D'}{4}(\sigma_j^x\mathbb{Z}_r^z\sigma_{j+r}^y-\sigma_j^y\mathbb{Z}_r^z\sigma_{j+r}^x)\Bigg ]-\frac{h'}{2}\sum_{j=1}^{N}\sigma_j^z,
\end{eqnarray}
where $\mathbb{Z}_r^z = \prod_{l=j+1}^{j+r-1}\sigma_l^z$, with $\mathbb{Z}_1^z=\mathbb{I}$
and \(\sigma^k\)(\(k=1,2,3\)) is the Pauli matrix, \(J_r'=\frac{J}{r^\alpha}\) with \(\alpha\) being the strength of power-law decay of the model, $\gamma$ is the anisotropy parameter, $D$ is the strength of the DM interaction, \(\mathcal{N}=\sum_{r=1}^{N/2}\frac{1}{r^\alpha}\) is called the Kac-scaling factor \cite{Kac_jmp_1963} which ensures extensivity of the energy in the case of finite-size systems \(\alpha<1\) in the finite-size limit and \(h'\) is the strength of the external magnetic field. To make the analysis dimensionless, we redefine the magnetic field as \(h=h'/J\) and DM interaction strength as  \(D=D'/J\). We consider the periodic boundary condition (PBC), i.e., \(\sigma_{N+1}\equiv \sigma_1\).  This model can be solved analytically by mapping spins into free fermions under the following Jordan-Wigner transformation \cite{barouch_pra_1970_1, barouch_pra_1970_2, lieb1961, glen2020}:
\begin{align}
    \sigma^x_n &=  \left( c_n + c_n^\dagger \right)
 \prod_{m<n}(1-2 c^\dagger_m c_m) \nonumber \\
    \sigma^y_n &=i\left( c_n - c_n^\dagger \right)
 \prod_{m<n}(1-2 c^\dagger_m c_m) \nonumber \\\text{and}\quad
\sigma^z_n&=1-2 c^\dagger_n  c_n,
 \label{eq:Jordan_wigner}
\end{align}
where \(c_m^\dag\)(\(c_m\)) is creation (annihilation) operator of spinless fermions and they follow fermionic commutator algebra. The corresponding free fermionic version of the Hamiltonian takes the form 
\begin{eqnarray}
   \nonumber H&=&\sum_{n}\sum_{r}\frac{J_r}{2}\left ((1+iD)c_n^\dagger c_{n+r}+\gamma c_n^\dagger c_{n+r}^\dagger + \text{h.c}\right )\\&&+h(c_n^\dagger c_n-1/2),
   \label{eq:JW_hamil}
\end{eqnarray}
where \(J_r=\frac{1}{r^\alpha}\). One can observe that the Hamiltonian which is quadratic in fermionic operators have long-range interaction both in hopping and superconducting terms. In addition, an extra hoping term emerges due to the DM interaction which also induces a complex phase. It is worthwhile to mention here that in the absence of DM interaction for \(\gamma=0\), the model is \(\mathbb{U}(1)\) symmetric while it is \(\mathbb{Z}_2\) symmetric with \(\gamma\ne 0\), \(D=0\). The introduction of  DM interaction leads to a competition between these two symmetries.   In the short-range (SR) models, it is known that the conformal sector changes due to the shift of the symmetry sector \cite{Zhong_2013}, responsible for a non-trivial phenomenon.

Due to the presence of translational invariance in the system, momentum is a good quantum number. Hence, we perform a Fourier transform of the form, \(c_n=\frac{1}{\sqrt{N}}\sum_k e^{-i\phi_kn}c_k\) where \(\phi_k=\frac{\pi (2k-1)}{N}\) \(\forall k\in[-N/2,N/2]\) and the corresponding Hamiltonian in the momentum basis can be written as \(H=\underset{k}{\oplus} H_k\) with 
\begin{eqnarray}
    H_k &=& \sum_{k>0}-\Big [\sum_{r}(\text{Re}(J_k^\alpha)+D\text{Im}(J_k^\alpha))c_k^\dagger c_k\nonumber\\&& +(\text{Re}(J_k^\alpha)-D\text{Im}(J_k^\alpha))c_{-k}^\dagger c_{-k}\nonumber\\&&-i\gamma\text{Im}(J_k^\alpha) (c_k^\dagger c_{-k}^\dagger- c_{-k} c_{k})\Big]\nonumber\\&& +h(c_k^\dagger c_k+c_{-k}^\dagger c_{-k})-h, 
    \label{eq:monentum_H}
\end{eqnarray}
and $J_k^\alpha = \sum_{r = 1}^\frac{N}{2}J_r e^{i\phi_kr}$. This Hamiltonian can be diagonalized by applying the Bogoliubov transformation,
\begin{equation}
    \begin{pmatrix}
        c_k\\c_{-k}^\dagger
    \end{pmatrix}=\begin{pmatrix}
        u_k & -v_k^{*}\\
        v_k & u_k^{*}
    \end{pmatrix}\begin{pmatrix}
        \tau_k\\ \tau_{-k}^\dag
    \end{pmatrix},
\end{equation}
where \(u_k=\cos\theta_k\), \(v_k=i\sin\theta_k\) and \(\theta_k\) is called Bogoliubov angle, given by \(\cos\theta_k=\frac{-h+\text{Re}(J_k^\alpha)-\lambda}{\sqrt{2(\lambda^2-(\text{Re}(J_k^\alpha)-h)^2)}}\) and \(\lambda=\sqrt{(h-\text{Re}(J_k^\alpha))^2+(\gamma~\text{Im}(J_k^\alpha))^2}\). The eigenvalue corresponding to each momentum, \(k\), is given as
\begin{equation}
    \epsilon_{\pm k}=-D~\text{Im}(J_k^\alpha)  \pm \lambda.
    \label{eq:dispersion}
\end{equation}
Note first that for \(D\ne 0\), although the Bogoliubov basis, \((u_k, v_k)^T\), are independent of DM interactions,  \(\epsilon_k\ne \epsilon_{-k}\),  making the analysis non-trivial.
Further, when \(D=0\), this model shows criticality at \(h=1\) and at \(h=-1+2^{1-\alpha}\) for \(\alpha>1\)  corresponding to the both ends of the Brillouin zone, i.e., \(\phi_k=0\) and \(\phi_k=\pi\) respectively.  In the case of \(\alpha<1\), the energy of the system diverges in the thermodynamic limit, hence, we consider large but finite system size \cite{solfanelli_jhep_2023}. With the introduction of DM, new gapless region emerges, which changes with \(\alpha\) for \(D>\gamma\). Such a region is known for the  nearest-neighbour $XY$ model when \(D>\gamma\) \cite{roy_prb_2019}  and  gapless to gapped transition occurs at \(D= \gamma\). However, we now establish that even when \(D>\gamma\), long-range interactions can induce a gapped phase instead of a gapless one, which implies the modification of the critical lines in presence of LR interactions.


\textcolor{black}{{\it Possible experimental implementations of LR model with DM interaction.} Long-range interacting systems occur naturally in trapped-ion experimental platforms, providing a highly regulated set-up for mimicking complicated quantum many-body phenomena \cite{trapped_ion_1,trapped_ion_2,trapped_ion_3}.  Engineering spin-dependent optical dipole forces allows an exact tuning of the interaction range, as defined by the parameter $\alpha$.  This tunability enables the investigation of many interaction domains, ranging from practically infinite-range interactions to regimes more akin to short-range behavior.  Furthermore, the strength of an effective transverse magnetic field, represented by $h$, can be independently regulated by adjusting the trapping potential in Penning trap configurations.  This control is necessary for simulating a variety of quantum magnetic models, including those with quantum phase transitions and critical behaviors \cite{quantum_magnetism_ion_trap}.}

\textcolor{black}{Recent studies have focused on the realization of the Dzyaloshinskii-Moriya interaction, which is quantified by the parameter $D$. According to a recent suggestion \cite{athreya_bilayer_iontrapp_prx_2024}, a bilayer Penning trap structures may support complex eigenvectors and build effective spin-orbit couplings, making them a feasible way to mimic DM interactions. These studies suggest that a ion-trap experiment in a bilayer Penning trap can effectively simulate DM interaction with the ability to modulate all the parameters such as $\alpha$, $h$ and $D$. Furthermore, there are several trapped ion experiments where long-range Ising interactions have been realized, providing a platform to realize  long-range DM interactions \cite{Jurcevic2014Jul}. }

\textcolor{black}{Further,  it has been suggested that Kitaev chains with long-range hopping and pairing can serve as models for helical Shiba chains, consisting of magnetic impurities in an s-wave superconductor. The introduction of an extended DM interaction results in local phases for this interaction, adding hopping terms that generally involve complex phase factors, which may be simulable in such chains 
\cite{pientka_prb_2013}.  The experimental signature of the DM interaction is the presence of skyrmions and the chiral nature of magnon quasiparticles. Since the DM interaction arises due to the broken inversion symmetry and strong spin-orbit coupling, it could potentially be realized in spherical FeGe crystals \cite{sinaga_prb_2024}.}

\begin{figure}
    \centering
    \includegraphics[width=\linewidth]{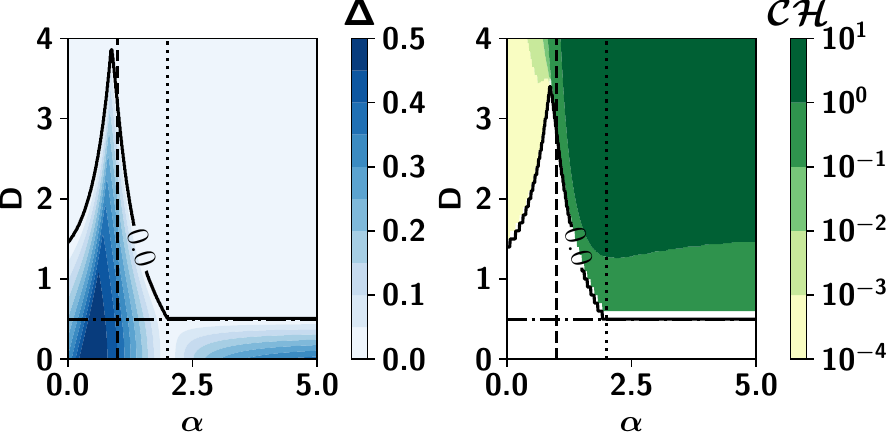}
    \caption{\textbf{Contour plot of \(\Delta\) and \(\mathcal{CH}\) in the  \((\alpha,D)\)-plane.}  \(D\) is the interaction strength of the  Dzyaloshinskii-Moriya interaction while \(\alpha\) indicates the power-law decay in the range of interactions.  Here the anisotropy parameter, \(\gamma =0.5\) and the strength of the magnetic field, \(h=-0.5\). For the NN \(XY\) model, it is known that the system is gapless when \(D> \gamma =0.5\). However, we find that this is not the case when \(1< \alpha <2\) and \(\alpha<1\). Moreover, the right-hand sinde figure shows that whenever the system is gapless, it is chiral, i.e., \(\mathcal{CH} \neq 0\). The system size is chosen to be \(N=512\). All axes are dimensionless.}
    \label{fig:chiral_phase_alpha}
\end{figure}
\label{sec:phase_classification}

\begin{figure}
    \centering
    \includegraphics[width=\linewidth]{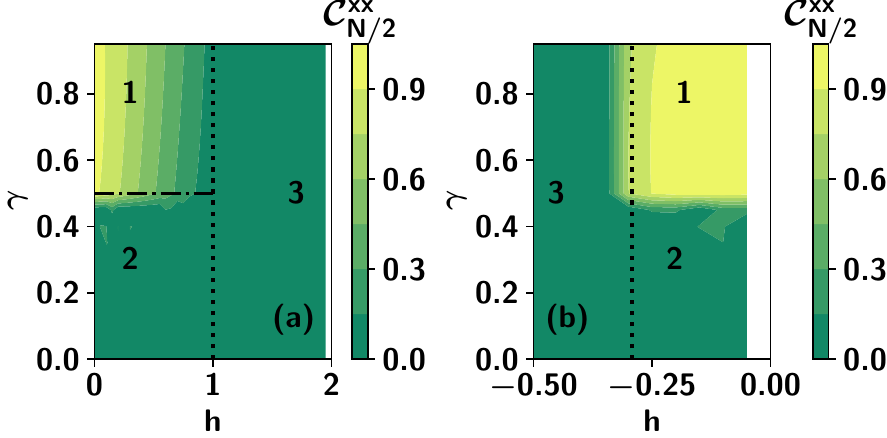}
    \caption{\textbf{Phase diagram of the long-range extended \(XY\) model with nonvanishing DM interactions.}  Nonvanishing \(\mathcal{C}_{N/2}^{xx}\)  indicates the \(FM_x\) phase for varying external magnetic field, \(h\) and the anisotropy parameter, \(\gamma\) in the \(xy\)-plane. For \(h>0\), the critical point is independent of \(\alpha\), \(FM_x\) extends up to \(h=1\) and bounded by \(D=\gamma\) -line in the \(y\)-direction (see the shaded region Fig (a)). On the other hand, for \(h<0\), critical points depends upon \(\alpha\) and \(FM_x\) phase is bounded upto \(h=-1+2^{1-\alpha}\) in the \(x\)-direction. But the bound in the \(y\)-direction is not easy to derive which depends non-linearly on \(\alpha\) (see the shaded region Fig (b)).  Other parameters of the systems are \(D=0.5\), \(\alpha=1.5\) and \(N=512\). All the axis are dimensionless.}
    \label{D=0.5}
\end{figure}


\subsection{The gapless chiral phase depending on LR interactions}
\label{sec:gapless_chiral}
Apart from quantum phase transitions at zero temperature in which the energy gap vanishes, the LR Ising models possess another three distinctive regimes with respect to power-law fall-off rate \(\alpha\), namely, non-local $(0 < \alpha < 1)$, quasi-local $(1 < \alpha < 2)$ and local $(\alpha > 2)$ according to different scaling law of correlation length \cite{Vodola_prl_2014,Vodola_njp_2015}. It is important to note here that in the absence of DM interaction, gap-closing never occurs by tuning the parameter $\alpha$ for \(h>0\) except \(h_c=1\). Further, we know that \(\alpha\gg 2\), the system with NN interactions becomes gapless for \(D>\gamma\) depending on \(h\) and the chiral phase emerges along with paramagnetic and ferromagnetic phases. Due to the long-range DM interactions, a competition between the range of interactions, \(\alpha\) and the strength of DM interactions surfaces which becomes responsible for the change in condition for the gapless  to gapped transition. More specifically, for moderate values of \(\alpha(<2)\), \(D>\gamma\) does not guarantee chiral gapless phase. We find the following in case of LR model. 


\begin{proposition}
    For a given value of the magnetic field $h$ (both in \(h\ge 0\) and \(h<0\)), a gapped to gapless transition occurs when  $D>\gamma$ depending upon the value of \(\alpha\) instead of \(D=\gamma\), known for the NN model.
\end{proposition}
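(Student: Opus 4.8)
\emph{Setup and reduction.} The plan is to read the gap-closing condition directly off the single-particle spectrum~(\ref{eq:dispersion}) and then track how it moves with the fall-off rate $\alpha$. Write $f_\alpha(\phi)=\mathrm{Re}\,J_k^\alpha$ and $g_\alpha(\phi)=\mathrm{Im}\,J_k^\alpha$ as functions of the momentum $\phi=\phi_k$; in the thermodynamic limit these are the (Kac-scaled) real and imaginary parts of $\mathrm{Li}_\alpha(e^{i\phi})$, and for $\alpha<1$ one keeps $N$ finite so that they remain finite trigonometric sums. Since $\lambda\ge0$ one has $\epsilon_{-k}\le\epsilon_{+k}$, and at the zone edges $g_\alpha$ vanishes, so there $\epsilon_{+k}=|h-f_\alpha|\ge0$ with $f_\alpha(0)=1$ and $f_\alpha(\pi)$ the $\alpha$-dependent lower band edge ($=-1+2^{1-\alpha}$ for $\alpha>1$, matching the critical field quoted above). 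Using the positivity $g_\alpha(\phi)>0$ on $(0,\pi)$ (the Fej\'er--Jackson/polylogarithm inequality, which via $f_\alpha'=-g_{\alpha-1}$ also shows that $f_\alpha$ is strictly decreasing there), the lower branch $\epsilon_{-k}=-Dg_\alpha-\lambda$ never changes sign on $(0,\pi)$ (the negative-momentum modes give the mirror statement by $\phi\to-\phi$), so a genuine gapless phase — as opposed to an isolated critical point at a zone edge — arises exactly when the upper branch dips below zero on an interior interval; squaring $\epsilon_{+k}<0$ this is equivalent to
\begin{equation}
\exists\,\phi\in(0,\pi):\qquad (D^2-\gamma^2)\,g_\alpha(\phi)^2 \;>\; \big(h-f_\alpha(\phi)\big)^2 .
\label{eq:gapless_cond}
\end{equation}

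\emph{Necessity of $D>\gamma$ and well-definedness of $D_c$.} If $D\le\gamma$ the left side of~\eqref{eq:gapless_cond} is non-positive while the right side is non-negative, so the condition fails for every $\phi$ and the spectrum is gapped; hence, for \emph{any} $h$, the gapped-to-gapless boundary sits at some $D_c(h,\gamma,\alpha)\ge\gamma$ and never below $\gamma$, which already rules out the NN value $D=\gamma$ as a universal boundary. Moreover $\partial_D[(D^2-\gamma^2)g_\alpha^2]=2Dg_\alpha^2>0$ wherever $g_\alpha\neq0$, so $M_\alpha(D)\equiv\max_{\phi\in(0,\pi)}\big[(D^2-\gamma^2)g_\alpha(\phi)^2-(h-f_\alpha(\phi))^2\big]$ is continuous and strictly increasing in $D$ for $D>\gamma$, and $D_c$ is the unique root of $M_\alpha(D_c)=0$.

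\emph{$\alpha$-dependence and strictness.} Whether $D_c$ exceeds $\gamma$ is governed by whether $h$ lies inside the range $\big(f_\alpha(\pi),f_\alpha(0)\big)$ of $f_\alpha$ on $(0,\pi)$, an interval whose edges are explicitly $\alpha$-dependent. If $h$ is interior, there is $\phi_0$ with $f_\alpha(\phi_0)=h$ and $g_\alpha(\phi_0)>0$, so~\eqref{eq:gapless_cond} holds for \emph{every} $D>\gamma$, giving $D_c=\gamma$; letting $\alpha\to\infty$ the range becomes $(-1,1)$ and one recovers the nearest-neighbour result for $|h|<1$. If instead $h$ lies outside — $h\ge1$ in the $h\ge0$ sector, or $h\le f_\alpha(\pi)$ (e.g.\ $h\le-1+2^{1-\alpha}$ for $\alpha>1$) in the $h<0$ sector, a genuinely $\alpha$-dependent threshold — then $(h-f_\alpha(\phi))^2$ is bounded away from zero uniformly in $\phi$, so $M_\alpha(\gamma)<0$ strictly; by the monotonicity above the spectrum remains gapped for a whole window $D\in(\gamma,\gamma+\delta)$, i.e.\ $D_c>\gamma$, with $D_c$ the solution of $M_\alpha(D_c)=0$ and hence a genuine $\alpha$-dependent curve — precisely the assertion of the proposition, valid in both magnetic-field sectors. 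I expect the main obstacle to be the positivity/monotonicity inputs for $g_\alpha$ and $f_\alpha$ on $(0,\pi)$ across the non-local, quasi-local and local regimes of $\alpha$ (with $\alpha<1$ handled at finite $N$, where $J^\alpha_k$ is a finite sum), together with the fact that the boundary $D_c(\alpha)$ is not expected to admit a closed form and must be located numerically, as noted in the caption of Fig.~\ref{D=0.5}.
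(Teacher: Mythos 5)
Your route is genuinely different from the paper's. The paper's own proof sets up the same dispersion relation, Eq.~(\ref{eq:dispersion}), and the same gap criterion, but then states that the Fermi point cannot be located analytically for the long-range model and simply reports what is seen ``by differentiating numerically.'' Your reduction of gaplessness to the existence of a momentum with $(D^2-\gamma^2)\,[\mathrm{Im}\,J_k^\alpha]^2>(h-\mathrm{Re}\,J_k^\alpha)^2$, the resulting one-line proof that $D\le\gamma$ forces a gap, the strict monotonicity in $D$ of the maximized left-hand side (hence a well-defined $D_c(h,\gamma,\alpha)\ge\gamma$), and the identification of the crossover with whether $h$ lies inside the $\alpha$-dependent band of $\mathrm{Re}\,J_k^\alpha$ are all genuine additions: in particular your criterion explains the observed boundary $h=-1+2^{1-\alpha}$ and correctly predicts that for $h=-0.5$ the threshold detaches from $D=\gamma$ exactly at $\alpha=2$, consistent with Fig.~\ref{fig:chiral_phase_alpha}.

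That said, the argument has two genuine holes, and both sit precisely in the regimes the proposition is about. First, the positivity of $g_\alpha(\phi)=\mathrm{Im}\,J_k^\alpha$ on $(0,\pi)$ --- needed both to discard the lower branch and to square $Dg_\alpha>\lambda$ legitimately --- is a Fej\'er--Jackson-type statement about the partial sums $\sum_r r^{-\alpha}\sin(r\phi)$; it holds for $\alpha\ge1$ but is known to fail once $\alpha$ drops below roughly $0.31$, and the monotonicity of $f_\alpha$ via $f_\alpha'\propto-g_{\alpha-1}$ requires the same input at exponent $\alpha-1$, so it is unsafe throughout $1<\alpha<2$. You flag this obstacle but do not close it. Second, and more seriously, the intermediate-value step ``if $h$ is interior to the range of $f_\alpha$ there is a $\phi_0$ with $f_\alpha(\phi_0)=h$, hence $D_c=\gamma$'' presumes a continuum of momenta. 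For $\alpha<1$ the analysis must be done at finite $N$ (as you yourself note), where the Kac-normalized $f_\alpha$ drops from $1$ to nearly $0$ across the first few allowed momenta $\phi_k=\pi(2k-1)/N$, so no mode need sit near $h$ even when $h$ is ``inside the band.'' This is not a technicality: the paper reports a gapped spectrum at $D=1.5>\gamma$, $h=0.5$, $\alpha=0.5$ (Table~\ref{tab:xyz_training_study_results}), which your continuum dichotomy would classify as gapless. So your characterization of $D_c$ is sound for $\alpha>1$ (and reproduces the nearest-neighbor limit), but the non-local regime --- where Fig.~\ref{fig:chiral_phase_alpha} shows the most dramatic gapped island at $D>\gamma$ --- still requires a separate finite-$N$, discrete-momentum argument.
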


\begin{proof}
    The proof is done by analyzing the dispersion relation in Eq. (\ref{eq:dispersion}). A system is said to be
gapped when the expression, \(\Delta\) \cite{soltani_magn_2019,luo_prb_2022}, defined as
\begin{equation}
    \Delta=\max\{\min\{\epsilon_k\},0\}>0;\quad \forall \phi_k\in [-\pi, \pi], 
\end{equation} 
otherwise, the spectrum is gapless (with $\Delta = 0$). The exact point, where the gap-closing takes place, depends on the parameters of the Hamiltonian, and the corresponding momentum, \(k_F\), called the Fermi point, can be obtained as a solution of \(\pdv{\epsilon_k}{k}=0\). However, due to the presence of long-range order, it is cumbersome to find the Fermi point analytically. By differentiating numerically, we observe that the gapless phase never occurs when $D<\gamma$ while for $D>\gamma$, the gapless phase occurs although it is not ubiquitous, and changes with the variation of \(h,D,\gamma\) and \(\alpha\). In particular, when \(h<0\), with the decrease of \(\alpha\), the system is gapped even when \(D>\gamma\) and two distinctive gapped regions emerge when \(1\le \alpha<2\) and \(0<\alpha<1\) which depend on the strength of \(D\), thereby inducing (destroying) gapped (gapless) regions. However, for \(\alpha>2\), $D > \gamma$ guarantees gapless phases and the gapless to gapped transition occurs at \(D=\gamma\). On the other hand, we find, that when \(1\le \alpha<2\), strong DM interactions are required to obtain the gapless phase and the desired asymmetric interaction strength for obtaining the gapless phase {\it increases} with the decrease of \(\alpha\) (see Fig. \ref{fig:chiral_phase_alpha}). Note, further, that the maximum \(D\) necessary for the gapless phase does not happen exactly at \(\alpha=1\), but at a nearby point which can be due to finite size analysis.  

\end{proof}

It is now evident that the interplay of \(\alpha\) and \(D\) along with \(\gamma\) and \(h\) can change the critical lines of the phase diagram. Therefore, we now examine the phases in the \((h,\gamma)\)- and \((\alpha, D)\)-planes.

{\it $\alpha$-dependent chiral phase.} In the nearest-neighbor  $XY$ model with DM interactions, the gapless phase gives rise to a chiral order that measures local current flow between the nearest neighbor sites. It can be established, by computing \cite{jafri_prb_2008} the order parameter,
\begin{equation}
    \mathcal{CH}= \expval{\frac{1}{4N}\sum_{i=1}^N \sigma^x_i\sigma^y_{i+1}-\sigma^y_i\sigma^x_{i+1}}
\end{equation}
between any two neighboring sites, \(i\) and \(i+1\) in the ground state which describes the helical alignment of spins or chiral order in the \(z\)-direction. 
In the nearest-neighbor case, i.e., for the NN $XY$ model with $D > \gamma$ and suitable \(h\) values, it was shown that $\mathcal{CH} > 0$ when $\Delta = 0$. 

In the LR model, we find that for a fixed $h$, $\alpha <2$ and $D > \gamma$, whenever $\Delta = 0$, i.e., the system is gapless, it possesses chiral order with order parameter $\mathcal{CH} > 0$ (as shown in Fig. \ref{fig:chiral_phase_alpha}). Interestingly, however, there exists region in which for \(D> \gamma\), \(\Delta>0\) (as shown in the above proposition), and the corresponding chiral order parameter vanishes, i.e., \(\mathcal{CH} =0\).


{\it Ferromagnetic and paramagnetic phases.} We now calculate the long-range magnetic order parameter, $\mathcal{C}^{xx}_{N/2} =  \expval{\sigma^x_1\sigma^x_{N/2}}$. When $\mathcal{C}^{xx}_{N/2} \ne 0$, the system is said to be in the ferromagnetic-\(x\) phase $(FM_x)$ while paramagnetic when it vanishes. On one hand, we show that the $FM_x$ phase occurs when the system is gapped and $0<h<h_c^1=1$, $1<\alpha<2$, and $D < \gamma$ (see Fig. \ref{D=0.5}). On the other hand, when $h$ is negative, the system becomes ferromagnetic, i.e., $\mathcal{C}^{xx}_{N/2} > 0$  when $h > h_c^2=-1+2^{1-\alpha}$ with \(1<\alpha<2\). When the parameters  are neither chiral nor ferromagnetic, the system is in a paramagnetic phase. 

{\color{black} Tuning the strength of the parameters, we can qualitatively describe  the phases of this model which changes due to the introduction of long-range interactions. The strength of long-range interactions, \(\alpha\), perturbs the system, and  can delay the transition from a gapped to a gapless region, although no analytical phase boundary has been established so far. We provide the phase boundaries based on numerical simulations, which depend on (\(\alpha\)), \(\gamma\), and \(h\). More precisely, we present the qualitative phase boundaries for this extended long-range DM model.
\begin{enumerate}
    \item  \textbf{ \(D > \gamma\) and \(\alpha > 2\)}: The system is in the gapless chiral phase which also depends upon the strength of the magnetic field \(h\), although, the exact boundary is not known.

\item \(\forall h\), \textbf{ \(D \leq \gamma\) and \(\alpha > 2\)}: The system is in the gapped region.
    
    \item \textbf{\(h > 0\) \text{or} \(h<0\), \(D > \gamma\) and \(0 < \alpha < 2\)}: The system may belong to the gapless chiral region, but depending on the strength of \(\alpha\),  there exists a gapped region for a fixed \(h\), which is an artifact of the long-range interaction strength.

\end{enumerate}
}

\subsection{Constant decaying exponent of correlations with DM interactions}
\label{sec:static_corre_decay}

From the above analysis, it is evident that the transition from the chiral gapless phase to the other magnetic gapped phase heavily depends on the fall-off rate $\alpha$. Let us now investigate how the scaling of classical correlations and quantum mutual information between two spins (a measure of total correlations containing both quantum and classical correlations components) \cite{Winter05} can recognize the $\alpha$-dependent phases discussed above. 
The classical correlation (CC) between two sites, separated by a distance $R$,  can be represented as $\mathcal{C}_R \equiv \mathcal{C}^{xx}_{i,i+R} = \langle \sigma_i^x \sigma_{i+R}^x \rangle$ while the quantum mutual information is defined as 
$\mathcal{I}_R \equiv \mathcal{I}_{i:i+R}=S(\rho_i)+S(\rho_{i+R})-S(\rho_{i, i+R}),$ where $S(\sigma) = -\tr(\sigma\log_2\sigma)$ indicates the von-Neumann entropy of the state \(\sigma\), $\rho_i$ and $\rho_{i+R}$ are the reduced density matrices of the joint state $\rho_{i,i+R}$. As we impose periodic boundary condition, the site index $i$ can be ignored and both the classical and total correlation can be studied as $\mathcal{C}^{xx}_R$ and $\mathcal{I}_R$, where $R = \{1,2,\ldots \frac{N}{2}\}$. In typical one-dimensional (1D) quantum spin models, the CC decays exponentially when it is away from criticality, whereas, at the criticality, the polynomial decay with $R$ is observed. However, the extended Ising model deviates from this norm by showing different scaling laws near the critical points and away from it \cite{sadhukhan_arxiv_2021}. Moreover, in the long-range Kitaev chain (see Eq. (\ref{eq:JW_hamil}) with $D = 0$), quantum mutual information (QMI) persists between two distant segments of the chain, provided the system is in the non-local regime, i.e., $\alpha < 1$, thereby  predicting the correlation between two distant regions of the system  \cite{francica_prb_2022}.

In our study, we expect to have a notable scaling law due to the presence of long-range and additional DM interactions which is indeed the case. Firstly, we observe that the trends of $\mathcal{C}_R^{xx}$ and $\mathcal{I}_R$ with $R$ is qualitatively different when the system is in the chiral phase and when it is not. Secondly, a counter-intuitive observation is that if one tunes the parameters 
\begin{figure}
    \centering
    \includegraphics[width=\linewidth]{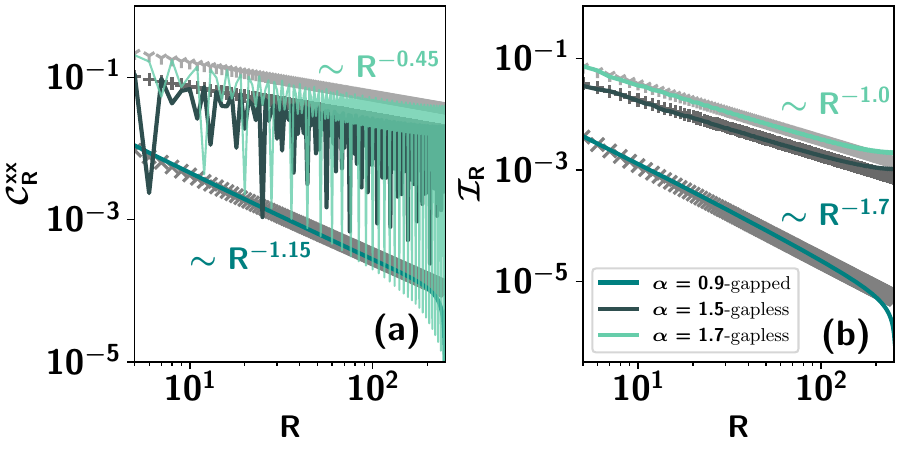}
    \caption{Classical correlation $(\mathcal{C}^{xx}_R)$ and mutual information $(\mathcal{I}_R)$ (ordinate) of the bipartite state $\rho_R$ obtained from the ground state vs the distance $R$ (abscissa) between two arbitrary sites, \(i\) and \(i+R\). The corresponding fit is drawn with a dashed line in the same color. The parameter set of $h=-0.5$ and different values of $\alpha$ indicate whether the system is in a gapless or gapped phase. All other specifications are same as in Fig. \ref{fig:chiral_phase_alpha}. All the axes are dimensionless. }
    \label{fig:mu_decay_chiral}
\end{figure}
 \(D\), \(\gamma\), \(h\) and \(\alpha\) in such a way that \(\mathcal{CH}\ne 0\), the decay exponents of mutual information and classical correlation become constant as depicted in Fig. \ref{fig:mu_decay_chiral}. Specifically, we observe that irrespective of the details of the power-law exponent and the strength of the magnetic field, the decay exponent of CC and QMI falls off as 
\begin{align}
    \mathcal{I}_R & \propto R^{-1} \quad \text{and} \\
    \mathcal{C}^{xx}_R &\propto R^{-0.45},
\end{align}
which are independent of \(\alpha\) provided the ground state belongs to the gapless chiral region (see Fig. \ref{fig:mu_decay_chiral} and Table \ref{tab:xyz_training_study_results}). This behavior of \(\mathcal{I}_R\) and \(\mathcal{C}^{xx}_R\) highlights a more complex and highly correlated patterns between \(\alpha\) and \(D\) within the gapless region. It indicates that although the range of interactions can control the critical lines in the $(D,\gamma)$-plane, differentiating gapless and gapped phases, the decay rate of CC and QMI in the gapless phase remains unaltered  and the chiral phase favors the sharing of both quantum and classical correlations between distant sites. More importantly, such $\alpha$-dependent scaling behavior holds both for quantum and classical correlations (see Fig. \ref{fig:mu_decay_chiral}).  

On the other hand,  when the system is in the non-chiral (which we refer to as achiral) phase, the ground state displays the decay exponent of both classical and quantum correlations which vary with $\alpha$, i.e., depending on the internal descriptions of the model (see Fig. \ref{fig:mu_decay_chiral} and Table \ref{tab:xyz_training_study_results}). 

Through these observations, we elucidate the nuanced role of long-range DM interactions in modulating correlation properties across both chiral and achiral regions leading to the following proposition.

\begin{proposition}
    In the chiral region of the LR $XY$ model with DM interactions, the scaling exponents of CC and QMI are constant (independent of the fall-off rate, $\alpha$) while their exponents depend on $\alpha$ in the achiral region.
\end{proposition}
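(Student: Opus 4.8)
\emph{Proof proposal.} The plan is to reduce both $\mathcal{C}^{xx}_R$ and $\mathcal{I}_R$ to exact free-fermion expressions and then extract their large-$R$ asymptotics, isolating the feature responsible for the leading power law. First I would write both quantities in terms of the Majorana two-point functions of the Bogoliubov ground state. Since the ground state of $H=\oplus_k H_k$ is Gaussian, all correlators follow from $\langle c_k^\dagger c_k\rangle$ and $\langle c_k c_{-k}\rangle$, which are fixed by the Bogoliubov angle $\theta_k$ and by the mode occupations, i.e.\ by the signs of $\epsilon_{\pm k}=-D\,\mathrm{Im}(J_k^\alpha)\pm\lambda$ in Eq.~(\ref{eq:dispersion}). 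In the standard way $\mathcal{C}^{xx}_R$ becomes a Toeplitz determinant whose symbol is built from $e^{\pm i\theta_k}$ weighted by the occupations, while the two-site $\mathcal{I}_R$ is a fixed function of the correlation matrix of the two Majorana pairs at sites $i$ and $i+R$, whose entries are the Fourier coefficients of the same symbol. The key structural input is the location of the singularities of this symbol: in the chiral region ($D>\gamma$, $\Delta=0$) the gapless condition is met on an interval of momenta, so the occupations jump at Fermi points $\pm k_F$ with $k_F$ a \emph{generic interior} momentum solving $\epsilon_{k_F}=0$; away from the chiral region the occupations are either constant ($\Delta>0$) or jump at $k=0$ or $k=\pi$.

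Second, in the chiral region I would perform the large-$R$ asymptotics by stationary phase for $\mathcal{I}_R$ and by (generalized) Fisher--Hartwig for $\mathcal{C}^{xx}_R$, the leading term arising from the jumps of the symbol at $\pm k_F$. The crucial point is that $k_F$ is generic, so $J_k^\alpha=\sum_r r^{-\alpha}e^{i\phi_k r}$ — and hence $\theta_k$ and $\lambda(k)$ — is real-analytic in $k$ near $k_F$: the non-analyticity inherited from the power-law tail sits only at $k=0$ and $k=\pi$, where the band is gapped away from the conventional critical lines $h=1$, $h=-1+2^{1-\alpha}$. Therefore the local data controlling the exponent (the magnitude of the symbol's phase jump and the linearized dispersion at $k_F$) coincide with those of an ordinary short-range free-fermion Fermi point, pinning the exponent to its universal value, $\alpha$-independent: $\mathcal{C}^{xx}_R\sim R^{-1/2}$ (the numerically observed $R^{-0.45}$), and, since $\mathcal{I}_R$ is to leading order quadratic in the off-diagonal correlators that each carry this decay, $\mathcal{I}_R\sim R^{-1}$. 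The would-be $\alpha$-dependent contribution from the cusp of the symbol at $k=0,\pi$ is subleading because the dispersion is nonvanishing there, so it only renormalizes faster-decaying corrections; I would make this comparison of exponents explicit. This matches the effective central charge $c=1$ we find in the chiral region.

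Third, in the achiral region I would distinguish two cases. If $\Delta>0$ (gapped), the symbol is smooth and nonzero on the circle except for the power-law cusp at $k=0$ (and $k=\pi$) carried over from $\sum_r r^{-\alpha}e^{i\phi_k r}$; its Fourier transform is then dominated by this cusp, giving an algebraic envelope with an exponent that is an explicit function of $\alpha$ — the familiar long-range phenomenon — superseding the exponential factor set by the spectral gap. If instead the system is gapless but achiral, the gap closes at $k=0$ or $k=\pi$, which is exactly the non-analytic point of $J_k^\alpha$, so the local exponent of the symbol there combines the Fermi-surface jump with the $|k|^{\alpha-1}$-type cusp and is again $\alpha$-dependent. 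In both cases the decay exponents of $\mathcal{C}^{xx}_R$ and $\mathcal{I}_R$ vary with $\alpha$, consistent with Fig.~\ref{fig:mu_decay_chiral} and Table~\ref{tab:xyz_training_study_results}, which I would use to corroborate the analysis.

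The main obstacle is the asymptotic step in the chiral region: the symbol of the Toeplitz determinant for $\mathcal{C}^{xx}_R$ is not of standard Fisher--Hartwig form, since it simultaneously carries jump singularities at $\pm k_F$ and a cusp at $k=0$ inherited from the long-range tail, so a clean derivation requires a generalized Fisher--Hartwig expansion accommodating both singularity types together with a careful argument that the cusp contribution is strictly subleading for the relevant range of $\alpha$. Pinning down the exact exponent — and accounting for the small offset between the observed $0.45$ and $1/2$ — would additionally need finite-size-scaling extrapolation or subleading corrections. For the proposition I would be content with the leading-order analytic argument for $\alpha$-independence in the chiral phase, backed by the numerics.
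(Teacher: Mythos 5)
Your route is genuinely different from the paper's. The paper does not give an analytic proof of this proposition at all: it establishes the statement by computing $\mathcal{C}^{xx}_R$ and $\mathcal{I}_R$ exactly for the free-fermion ground state via the Pfaffian machinery of Appendix~\ref{pfaff_corre}, fitting the decay over $R$, and observing that the fitted exponents are $\alpha$-independent precisely when $\mathcal{CH}\neq 0$ and $\alpha$-dependent otherwise (Fig.~\ref{fig:mu_decay_chiral} and Table~\ref{tab:xyz_training_study_results}); the authors explicitly remark that closed-form expressions for the correlators are out of reach. Your Fisher--Hartwig-type argument — locating the singularities of the Toeplitz/Pfaffian symbol and noting that in the chiral phase the occupation jumps occur at generic interior Fermi points where $J_k^\alpha$ is analytic, so the local data are those of a short-range Fermi point — is a mechanism the paper never supplies, and if completed it would explain \emph{why} the exponents are universal rather than merely that they are. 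Your identification of the achiral case (singular content of the symbol confined to the long-range cusp of $J_k^\alpha$, hence $\alpha$-dependent algebraic tails even in the gapped phase) is likewise consistent with the paper's Table~\ref{tab:xyz_training_study_results}.

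There is, however, one concrete gap in the step you yourself flag as the obstacle. You assert that the cusp of $J_k^\alpha$ at $k=0$ only "renormalizes faster-decaying corrections" because the band is gapped there. But a $|k|^{\alpha-1}$ singularity in the symbol produces Fourier coefficients (hence fermionic two-point functions) decaying as $R^{-\alpha}$, whereas the Fermi-point jump contributes $R^{-1}$; for $\alpha<1$ the cusp contribution therefore decays \emph{more slowly}, and the paper's chiral data explicitly include $\alpha=0.8$ (and the $(\alpha,D)$ phase diagram has chiral points with $\alpha<1$). To close the argument you must either show that the cusp enters the symbol with an amplitude that is annihilated at the level of the spin correlator (i.e.\ after the determinant/Pfaffian resummation of the Jordan--Wigner string), or restrict the claim to $\alpha>1$ — neither of which is automatic. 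Two smaller points: $J_k^\alpha\to\mathrm{Li}_\alpha(e^{i\phi_k})$ is actually analytic at $\phi_k=\pi$ (the alternating series), so the only long-range non-analyticity is at $k=0$; and the offset between your predicted $R^{-1/2}$ and the observed $R^{-0.45}$ should not be waved off as finite-size error without a scaling check, since the paper treats $0.45$ as the stable fitted value across parameter sets.
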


\begin{figure}
    \centering
    \includegraphics[width=\linewidth]{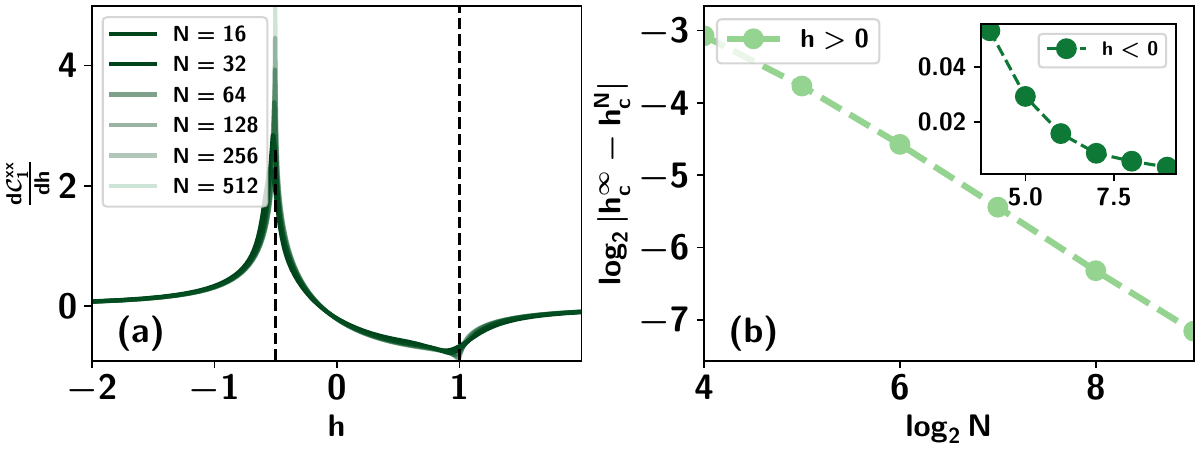}
    \caption{\textcolor{black}{\textbf{Finite size scaling, using nearest-neighbor classical correlation, \(\mathcal{C}_1^{xx}\)}. (a) A kink in the derivative of \(\mathcal{C}_1^{xx}\) indicates the critical points for both \(h > 0\) and \(h < 0\). For \(h > 0\), the critical point is independent of \(\alpha\), and the first derivative exhibits a kink at \(h_c = 1\), which becomes sharper as the system size \(N\) increases. (b) To perform the scaling analysis, we plot \(\log_2|h_c^{\infty} -h_c^N|\) against \(\log_2 N\) where \(h_c^{\infty} =1\) and 
    \(h_c^N\) is the value at which \(\mathcal{C}^{xx}_1\) shows the kink for a given system size \(N\). Its behavior indicates that \(N = 512\) provides a good approximation to the thermodynamic limit. On the other hand, for \(h < 0\), the critical point depends on \(\alpha\), and is given by \(h = -1 + 2^{1 - \alpha}\). This critical point exhibits the same trend as the one for \(h > 0\), becoming sharper with increasing system size (see the inset of (b)). The system parameters used are \(D = 0\) and \(\alpha = 2\). All axes are dimensionless.}
}
    \label{fig:scaling_N}
\end{figure}

\textcolor{black}{{\it Scaling analysis.} To provide insights into the thermodynamic limit, we perform a scaling analysis to understand how the system approaches criticality as the system size increases. Specifically, we analyze \(\log_2 |h_c^\infty - h_c^N|\), obtained by computing nearest-neighbor classical correlation \(d\mathcal{C}_1^{xx}/dh\), as a function of \(\log_2 N\), where \(h_c^\infty\) and \(h_c^N\) are the critical points in the thermodynamic limit and when the system size is \(N\) respectively. The decay of this quantity with \(\log_2 N\) indicates that the quantity reaches the thermodynamic limit. Furthermore, the model exhibits critical points at \(h_c = 1\) and \(h_c = -1 + 2^{1 - \alpha}\) for \(\alpha > 1\). We demonstrate that the classical correlations in the \(x\)-direction capture these critical points, and as the system size increases, the identification of these critical points becomes more prominent, as shown in Fig.~\ref{fig:scaling_N}.  These results indicate that the classical correlations with system size \(N=512\) indeed mimics the properties with \(N \rightarrow \infty\) which we again confirm by computing \(\mathcal{C}^{xx}_1\) with system size upto \(N=1024\). On the other hand, when calculating the dynamical correlations, we can extend the system size up to \(N = 30K\)  (see Sec. \ref{sec:dynamics}).  Note, however,
it is difficult to provide closed form expressions for the correlators owing to the presence of complicated integrals which is true even for short-range models. 
}

\begin{table}[]
    \centering
    \begin{tabular}{|c| c | c | c | c | c |}
    \Xhline{2\arrayrulewidth}
    \Xhline{2\arrayrulewidth}
    D & h & \(\alpha\) & $\mathcal{I}_R$ & \(\mathcal{C}^{xx}_R\) & Phase\\
    \Xhline{2\arrayrulewidth}
    \Xhline{2\arrayrulewidth}
     $1.5$ & $0.5$ & $ 0.5 $ & \(R^{-0.25}\) & \( R^{-0.3}\) & \text{ gapped} \\
    $1.5$ & $0.5$ & $0.8$ & \(R^{-1}\) & \(R^{-0.45}\) & \text{ gapless} \\
    $1.5$ & $0.5$ & $1.3$ & \(R^{-1}\) & \( R^{-0.46}\) & \text{ gapless} \\
\Xhline{2\arrayrulewidth}
    \
     $2.5$ & $-0.5$ & $0.5$ & \( R^{-0.4}\) & \( R^{-0.5}\) & \text{ gapped} \\
    $2.5$ & $-0.5$ & $0.8$ & \(R^{-1.5}\) & \(R^{-1.2}\) & \text{ gapped} \\
    $2.5$ & $-0.5$ & $1.3$ & \( R^{-1}\) & \(R^{-0.45}\) & \text{ gapless} \\
    \Xhline{2\arrayrulewidth}
    \end{tabular}
    \caption{The scaling laws of mutual information \(\mathcal{I}_R\) and classical correlation \(\mathcal{C}^{xx}_R\) with \(R\) for various parameter values of DM interaction strength ($D$), magnetic field ($h$), and power-law exponent ($\alpha$). When the system is gapless, universal scaling exponents for both the quantities emerge which is not the case for the gapped regimes.  }
    \label{tab:xyz_training_study_results}
\end{table}

\subsection{Amendment in scaling of entanglement entropy and central charge owing to long-range DM interactions}
\label{sec:entropy_charge}

In a many-body system, an insight of universality classes can be obtained from the generic features of the ground state. A prominent physical quantity that serves the purpose is the block entanglement entropy  of block-size $l$ \cite{fazio_rev, Calabrese2004Jun}. In particular, after partitioning the $N$-party ground state into two blocks, containing $l$ and $N-l$ sites (with $l \ll N$), we compute the block entanglement of the ground state as $S_l = S(\rho_l)$ with $\rho_l$ being the reduced density matrix of the \(N\)-party ground state.  The $l$-block entropy can be calculated from the two-point correlation functions between modes, \(m\) and \(n\), given as
\begin{equation}
    C_{mn}=\langle \Psi|c_m^\dagger c_n | \Psi \rangle, \quad F_{mn}=\langle \Psi|c_m^\dagger c_n^\dagger |\Psi \rangle,
\end{equation}
where \(\ket{\Psi}\) is the ground state and the corresponding correlation matrix, consisting of correlation functions, can be written as
\begin{equation}
    \mathbb{C}_l=\begin{pmatrix}
        \mathbb{I}-C & F\\
        F^\dagger & C
    \end{pmatrix}.
\end{equation}
If \(\lambda_p\)'s are the eigenvalues of \(\mathbb{C}_l\), the von-Neumann entropy of \(\ket{\Psi}\) in the bipartition \(l:N-l\) is given as \(S_l=-\sum_{p=1}^{2l}\lambda_p\log_2\lambda_p\).
If a system is gapped which typically happens away from the criticality and when the range of interactions is relatively local, i.e., the norm of the Hamiltonian does not increase as the system size increases, $S_{l}$ follows the area law, which implies $S_{l}^{NC} \propto l^{d-1}$, where $d$ is the spatial dimension of the system and $NC$ stands for ``not at criticality" while it deviates from the area law at criticality \cite{Calabrese2004Jun}. Hence from the scaling behavior of \(S_l\), one can detect the transition from a gapped to a gapless phase which is interesting since a simple scalar quantity can describe the essential properties of the Hamiltonian instead of a complete microscopic description. 

On the other hand, when the system becomes gapless, representing the critical point, e.g., in the transverse NN $XY$ model,  $S_l$ scales logarithmically, i.e., for a translationally invariant spin chain, at critical points, \(S_{l}^C= \frac{c}{3}\log_2\Big[\frac{N}{\pi}\sin\frac{\pi l}{N}\Big]+a\) \cite{holzhey_npb_1994, Calabrese2004Jun}, where \(c\) is the {\it conformal or effective central charge} in the conformal field theory and \(a\) is a non-universal constant. Further, it was shown that the central charge carries the signatures of an underlying symmetry. For example, \(c\) takes value \(1/2\) for the transverse-field Ising-like Hamiltonian at the critical point for which the Hamiltonian adheres to the \(\mathbb{Z}_2\)-symmetry, while with the addition of asymmetric nearest-neighbor DM interaction, the Hamiltonian becomes  gapless region having $c=1$ and gapped with $c=1/2$. Hence EE turns out to be a powerful tool to separate a gapped phase from a gapless one.

It was also recently shown that if one considers long-range interactions instead of short-range ones \cite{yang_arxiv_2024,chakraborty_arxiv_2024}, where magnetic criticality depends on the value of $\alpha$, EE depends upon \(\alpha\) and the free-fermionic version of long-range Ising spin shows fractal entanglement apart from the volume and area law \cite{solfanelli_jhep_2023} and the scaling of entropy at critical points may not follow \(S_{l}^C\). In this case, we can define an effective conformal charge, \(c_{eff}\), to describe the universal properties of the system.

We are interested in finding how the scaling of $S_l$ in the LR Ising Hamiltonian gets altered in the presence of DM interactions. Specifically, we will show the revision of a central charge $c_{eff}$, which occurs in $S_l^C$. Before presenting the effects of DM interaction on the central charge, we observe that at criticality, i.e., $h_c^1 = 1$ of the extended Ising model, $c_{eff}$ monotonically increases when $1 < \alpha < 2$ and saturates to $1/2$ at $\alpha \ge 2$, (which is the scaling known for the NN model) as shown in Fig. \ref{fig:ceff}(a). When $\alpha < 2$, $c_{eff}$ varies non-linearly with $\alpha$ which points out that due to the presence of long-range interaction, conformal symmetry breaks down. 

On the other hand, the introduction of long-range  DM interaction makes more modifications in $c_{eff}$ -- \(\,\) (i) the saturation value changes to $1$ when $\alpha \rightarrow \alpha_c = 2$, with $\alpha_c$ being the point which mimics the NN transverse Ising model with NN DM interactions; (ii) for \(\alpha<\alpha_c\), the conformal symmetry breaks down whereas when $\alpha > \alpha_c$,  $c_{eff}\rightarrow 1$ (see Fig. \ref{fig:ceff}(b)). 
Our study reveals that even with the introduction of long-range asymmetric interaction, the conformal symmetry is not restored for all the values of $\alpha$ although $c_{eff}$ indicating the conformal invariance begins for smaller values of $\alpha_c<2$ which is found in the absence of DM interactions. This behavior can again be attributed to the fact that when the DM interaction is present, the system becomes gapless upon varying the power-law decaying factor $\alpha$.

\begin{figure}
    \centering
    \includegraphics[width=\linewidth]{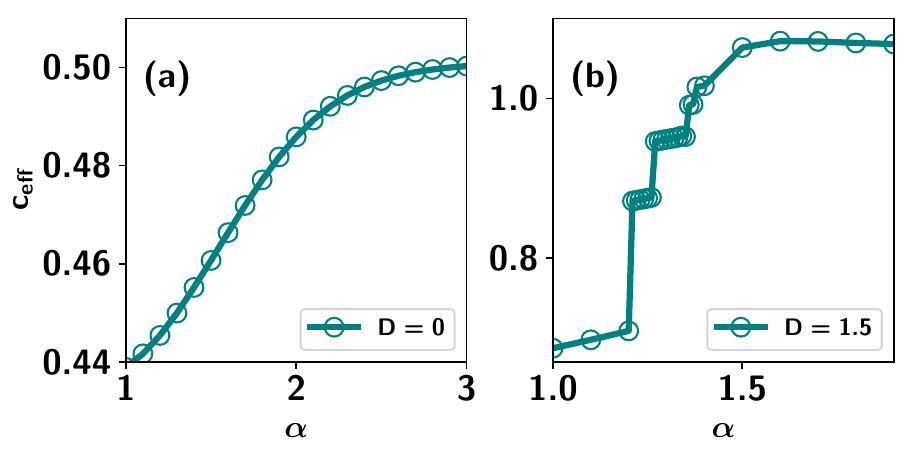}
    \caption{{\bf Effective central charge, $c_{eff}$ (ordinate) against the  the power-law exponent $\alpha$ (abscissa).} We obtain $c_{eff}$ by fitting the block entanglement entropy in the logarithmic scaling equation,  $S_{l}^C =  \frac{c_{eff}}{3}\log_2\Big[\frac{N}{\pi}\sin\frac{\pi l}{N}\Big]+a$ of the ground state at criticality, i.e., \(h_c=1.0\). (a) The non-linear but smooth increase of $c_{eff}$  with \(\alpha\) in the absence of asymmetric DM interaction. (b) The abrupt change in $c_{eff}$ against \(\alpha\) is observed in the presence of DM interactions for a fixed values of \(D\). 
    The kinks possibly arises due to the presence of chirality (gapless).
    All other specifications are same as in Fig. \ref{fig:chiral_phase_alpha}. All the axes are dimensionless.}
    \label{fig:ceff}
\end{figure}

\section{decay of dynamical correlation: uncovering chiral and achiral phases}
\label{sec:dynamics}

Let us now study the trends of physical quantities including quantum correlations in the time-evolved state in the transient and the steady state regimes, starting from the product state or the ground state which evolves according to the LR Hamiltonian with DM interactions. Our aim is to demonstrate that the scaling of correlations with time or with distant sites can indicate the transition from the chiral phase to the achiral ones. 

{\color{black} Note that the NISQ quantum computers are capable of simulating the dynamics of the correlators under scrutiny in this work.  A recent study has shown that the truncated Taylor series approach may be used to simulate the dynamics of the correlators \cite{kishor_bharti_truncated_scipost} while   another article \cite{sycamore_dtc_prx_quantum} describes the realization of a discrete time crystal (DTC) on the Google Sycamore quantum processor.  Using a quantum auto-correlator defined as $\langle \sigma^z(0)\sigma^z(n) \rangle$ at the discrete time-steps, $n$ of the associated Floquet drive, the DTC's existence has been verified.  The Hamiltonian has been efficiently simulated using the ``trotterized" circuit.  Therefore, we hypothesize that ``trotterization" or truncated Taylor series expansion can be used to examine the dynamics of the correlators  in our study in a NISQ computer.
}

\subsection{Relaxation behavior of dynamical correlations}

Let us first present the method to capture the relaxation of dynamical correlation \cite{makki_prb_2022} which is shown to comprehend distinct phases in equilibrium. The initial state and the evolution of the system in the momentum basis can be represented by the Bogoliubov–de Gennes (BdG) equation, given as

\begin{equation}
    \psi_{k}(0)=[u_k(0), v_k(0)]^T\quad \text{and} \quad \psi_{k}(t)=e^{-iH_kt}\psi_{k}(0).
\end{equation}
The corresponding dynamical correlation (DC) is defined as \(C_{mn}(t)=\expval{c_m^\dagger c_n}{\psi(t)}\) which is a fermionic correlation between two modes. The relaxation of this correlation can be measured by taking the difference between correlation in arbitrary time and time in which the system reaches the steady state, given by \(\delta C_{mn}(t)=C_{mn}(t)-C_{mn}(\infty)\), describing the decay of the correlation function with time. Let the Bogoliubov angle of the initial and quenched Hamiltonian be $\eta_k$ and \(\tilde{\eta}_k\) respectively and the difference between the angels is given as \(\alpha_k=\eta_k-\tilde{\eta}_k\). The relaxation of DC takes the form \cite{cao_prb_2024}
\begin{align}
    \nonumber &\delta C_{mn}(t) = \\ \label{eq:second} & \underset{k\in \text{gapped}}{\int}\sin 2\tilde{\eta}_k\sin 2\alpha_k \cos(\epsilon_k+\epsilon_{-k})t\cos[k(n-m)]= 
    \\ \label{eq:third} &\underset{k\notin \text{gapless}}{\int}\sin 2\tilde{\eta}_k\sin 2\alpha_k \cos(\epsilon_k+\epsilon_{-k})t\cos[k(n-m)],
\end{align}
where the initial state belongs to the gapped  and gapless regions in Eqs. (\ref{eq:second}) and (\ref{eq:third}) respectively. In our case, we have chosen \(\abs{m-n}=1\), but our result holds for any arbitrary \(m \text{ and }n\).
\begin{figure}
    \centering
    \includegraphics[width=\linewidth]{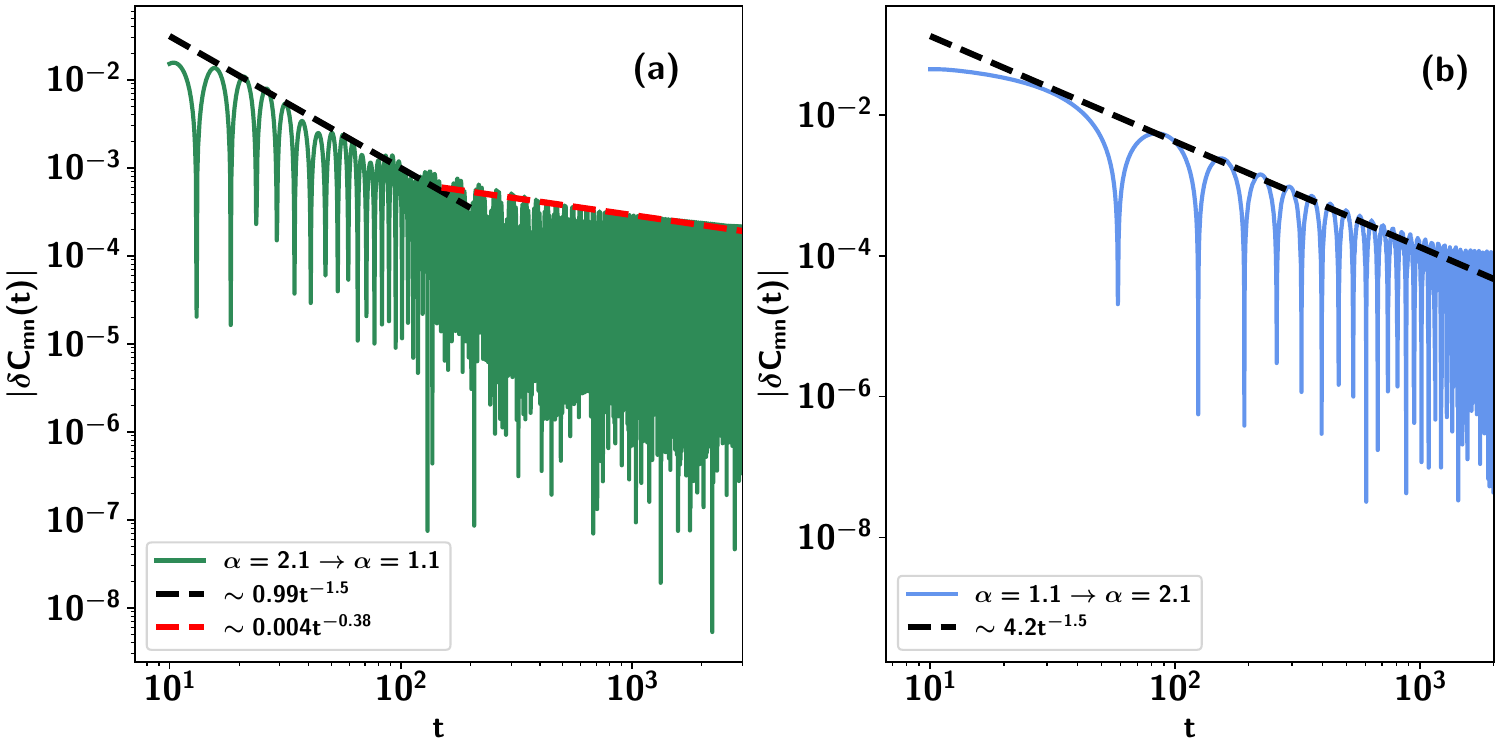}
    \caption{{\bf The transient  dynamics of the $\delta C_{mn}(t)$ (ordinate) as a function of time $t$ (abscissa) in the log-range Ising model with DM interactions}.  In both (a) and (b), \(h=-0.5, D=1.3, \gamma=1.0, \text{ and } N=30\text{K}\) for \(t=0\) and for all other values of \(t>0\). For evolution,  the sudden quench is performed from $\alpha_i = 2.1$ to $\alpha_q= 1.1$ in (a) and vice-versa in (b). A dashed line is a fit of the function indicating the overall behavior. All the axes are dimensionless.}
    \label{fig:dyn_exponent_gapless_gapped}
\end{figure}
When $\delta C_{mn}(t) \sim t^{-\chi}$, $\chi$ indicates the dynamical relaxation exponent characterized by the distinctive values of the system parameters.

In the case of the nearest-neighbor $XY$ model, it was shown that the exponent depends on the phase being commensurate or incommensurate. Specifically, in the commensurate phase, the derivative of the dispersion relation with respect to the momentum vanishes at the extreme ends of the Brillouin zone, while in the incommensurate phase, the derivative vanishes inside the Brillouin zone. If the initial state is a product state, i.e., the ground state of the model when the magnetic field goes to infinity and is then quenched with the Hamiltonian corresponding to the commensurate phase, we find $\chi = 3/2 $ while $\chi = 1/2$ when quenched with the incommensurate phase \cite{makki_prb_2022}. 

In the NN $XY$ model with DM interaction, if the initial state belongs to the gapless phase, the exponent changes to $\chi = 1$ if the quenching Hamiltonian is in the commensurate phase, which indicates the presence of DM interaction. In the case of quenching by a Hamiltonian in the incommensurate phase, the exponent is either $\chi = 1/2$ or $\chi = 1$ depending on the parameters of the initial state \cite{cao_prb_2024}. On the other hand, considering the LR Kitaev chain, it was found that the scaling law behaves similarly to the SR ones when the initial and post-quench Hamiltonian are non-critical although the exponent changes with the parameters chosen from an equilibrium phase transition \cite{makki_prb_2022}. 

The model considered in this work possesses both LR DM interactions and LR interactions in the $xy$-plane and hence the interplay between $\alpha$ and $D$ as mentioned in the static scenario can influence the scaling of $\delta C_{mn}(t)$, resulting to different $\chi$. For investigations, possible three situations arise in the non-local regime $(\alpha < 1)$ - (ni) both pre- and post-quench Hamiltonian are in the chiral phase; (nii) only pre-quench Hamiltonian belongs to the chiral phase; (niii) only post-quench Hamiltonian is chosen from the chiral phase. All three cases can also be considered when $\alpha$ is chosen from a quasi-local regime, i.e., when $\alpha > 1$ which we refer to as (qi) - (qiii). Eg., we find that $\chi \sim 1.5$ for the case of (niii) and when (qii) $\chi \sim 0.3$ (see Fig. \ref{fig:dyn_exponent_gapless_gapped}). 
This shows that the exponent in DC can distinguish between the gapped and the gapless phases  of the post- and pre-quench Hamiltonians  (see Fig. \ref{fig:dyn_exponent_gapless_gapped}). In addition, the robustness of correlation can be defined as a quantity $\frac{1}{\chi}$ such that if the value of $\chi$ is small, the system does not reach the steady state correlation value for a significantly long time. This reinforces the above result that the state from the gapless phase contains a strong correlation both in terms of space and time. 
\begin{figure*}
    \centering
    \includegraphics[width=\textwidth]{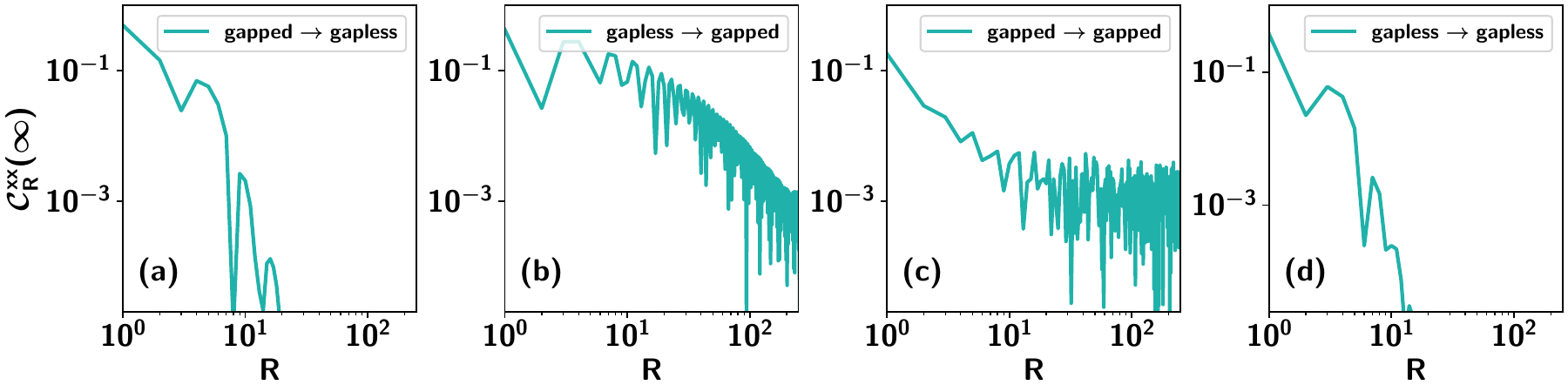}
    \includegraphics[width=\textwidth]{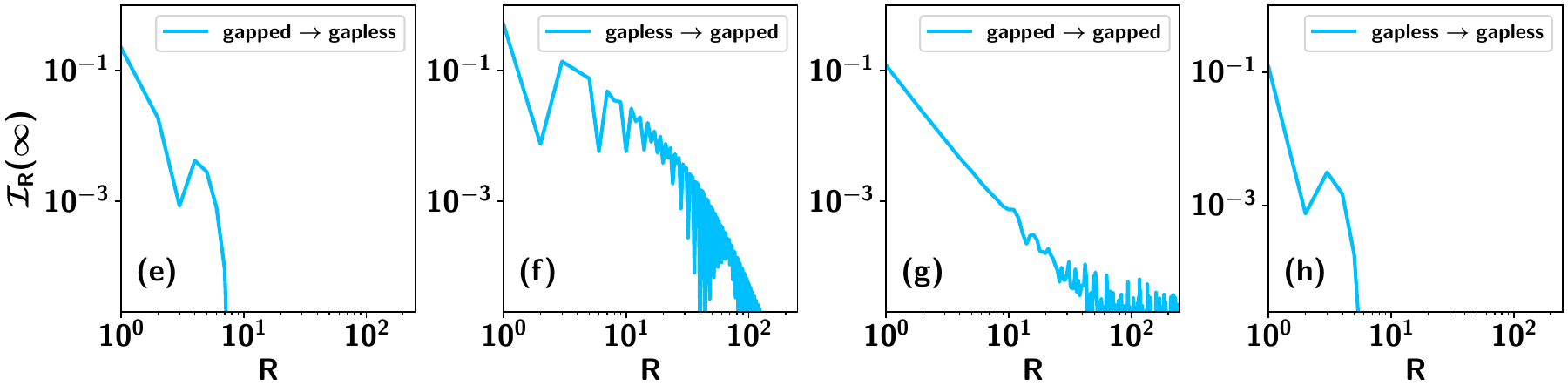}
    \caption{{\bf Classical and quantum mutual information in the steady state as a function of distance between the spins, $R$ (abscissa) of the LR Ising model with \(D=1.3\).} (a)-(d) Mutual information or total correlation $\mathcal{I}_R (\infty)$ (ordinate) in four different combinations of initial and quenching Hamiltonian picked from the gapped and gapless phases as mentioned in Fig. \ref{fig:dyn_exponent_gapless_gapped}. A similar plot for classical correlation $\mathcal{C}_R^{xx} (\infty)$ in (e)-(f). The strength of the magnetic field $h = -0.5$. Here \(N = 512\). Note the striking similarity between the top and bottom rows. All the axes are dimensionless. }
    \label{fig:mutual_gapped_gappless}
\end{figure*}
\subsection{Slow decay of total correlation and two-point correlation with LR DM interaction}
\label{sec:steady_state_corre}

We now focus on six scenarios ((ni) - (niii) and (qi) - (qiii)) to study the decay pattern of total correlation and classical correlation with $R$ in a steady state limit, represented by $\mathcal{I}_R(\infty)$ and $\mathcal{C}^{xx}_R(\infty)$ respectively. Even though the evolution is unitary, as we are studying the properties of a subsystem, a steady state can be reached in the dynamics. When the quenching Hamiltonian is in the gapped phase ((nii)), the steady state maintains both classical and quantum correlations between distant spins, and  persists even at large distance $R$ (see Figs. \ref{fig:mutual_gapped_gappless}(b), (c), (f) and (g)). Conversely, a gapless quenching Hamiltonian ((ni) and (niii)) results in a steady state where classical correlations are present only between nearby spins as depicted in Figs. \ref{fig:mutual_gapped_gappless} (a), (d), (e) and (h). Notice that the decaying natures of $\mathcal{I_R}(\infty)$ and $\mathcal{C}^{xx}_R(\infty)$ are independent of the initial state. Remarkably, the behavior of total and classical correlations remains similar when the distance between the spins is small.

\subsection{Effects of long-range in dynamics of EE}
\label{sec:dynamics_entropy}

In a final attempt to distinguish the gapless and gapped phases under dynamics, we investigate the
growth of entropy with time. The initial states are prepared as the ground states of gapped as well as from the gapless regions and EE, $S_l(\rho(t)) \equiv S_t$ for a fixed $l$ is computed after quenching the system to the gapped or gapless regions by changing the parameters $\alpha$ for a given $\gamma, h$ and $D$. Let us denote the initial and final fall-off rates be $\alpha_i$ and $\alpha_q$ respectively. Firstly, we notice that the rate of growth is shown to be linear in time in the case of the $XY$ model with NN \cite{XY_entropy_growth_calabrese_2008} and long-range interacting model \cite{Schachenmayer_entropy_growth_2013,regemortel_pra_2016,buyskikh_pra_2016,Maity2019_review}, with the initial state being a product state. The trends of $S_t$ with time depend on both $\alpha_i$ and $\alpha_q$, thereby indicating its dependence on pre- and post-quench Hamiltonian. 
\begin{figure*}
    \centering
    \includegraphics[width=\textwidth]{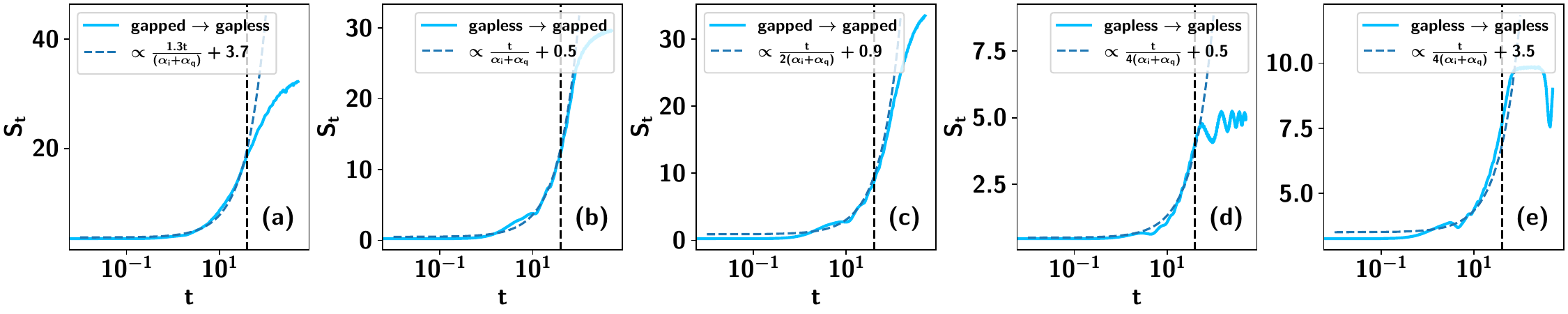}
    \caption{{\bf Entanglement entropy of block-size $l$ of the quenched state $S_t$ (ordinate) with respect to time $t$ (abscissa).} The parameter sets are given as follows:  (a)  \(h=-0.5, D=1.3, \alpha_i=1.1, \alpha_q=2.1\) that indicates a quenching from a gapped to a gapless phase, (b) \(h= -0.5, D=1.3, \alpha_i=2.1, \alpha_q=1.1\) -- from a gapless to a gapped quench, (c) \(h=-0.5, D=1.3, \alpha_i=1.1, \alpha_q=1.2\) - quenching from a gapped to a gapped phase, (d) \(h=-0.5, D=2.5, \alpha_i=0.5, \alpha_q=2.5\) -- quenching from a gapless to a gapless phase, and (e) \(h=-0.5, D=2.5, \alpha_i=2.5, \alpha_q=0.5\)-- from a gapless to a gapless quenching. In (d) and (e), a vertical dashed line at $t = \frac{l}{2} $  indicates the saturation time while in (a), (b), (c) saturation does not happen at \(t=\frac{l}{2}\) where \(l=80\). In (a), (b), (d), and (e),  we present a universal law for the growth of entanglement entropy $S_t^\pm$ using a dashed line, mentioned in Eq. (\ref{eq:dyn_scalingEE}). All the axes are dimensionless.  }
    \label{fig:entropy_dynamics}
\end{figure*}
  Secondly, the rate of increment of EE with time is faster if the initial state is in a gapless region and the EE value gets saturated around time \( t \sim l/2\), as observed in the case of the nearest-neighbor Ising model quenched from a non-critical regime to a critical one \cite{evolution_entropy_calabrese}. However, if the initial and final states are chosen from a gapless region, the increment rate is slow and the saturation is reached faster around \(t=l/2\) compared to a gapped case (see Fig. \ref{fig:entropy_dynamics}(d) and (e)). Also, the scaling of entropy increment until it reaches saturation can be approximated as
\begin{equation}
    S_t^{\pm} \propto a_1\frac{t}{\abs{\alpha_i \pm \alpha_q}}+a_2,
    \label{eq:dyn_scalingEE}
\end{equation}
where \(a_1\), and \(a_2\) are constants (see Fig. \ref{fig:entropy_dynamics}).
When both the initial and the final Hamiltonian are gapped or one of the pre- or post-quench Hamiltonian is gapped, the saturation of EE requires much larger time than that obtained with  the Hamiltonians corresponding to the initial and final states being gapless.

\section{conclusion}
\label{sec:conclusion}

Quantum long-range systems are intriguing due to their non-local properties, often displaying counter-intuitive phenomena compared to a short-range Hamiltonian. These models are pervasive in physically realizable systems such as trapped ions and atomic, molecular, and optical setups.   Therefore, comprehending their universal features is essential for the advancement of quantum technologies and condensed matter physics.


We found that the long-range extended Dzyaloshinskii-Moriya (DM) interactions with the extended $XY$ model have unique consequences on the critical behavior of the system with a variation of the fall-off rate, \(\alpha\), in which the range of interactions between the sites decays. Specifically,  even with a comparable or more DM interactions than the anisotropy parameter, we proved that the system remains gapped depending on \(\alpha\), which  varies with the direction of the transverse magnetic field although the gapless phase still possesses chiral order.  
By analyzing classical and total correlations between two arbitrary sites of the ground state in the presence of DM interactions, we observed that both  of them decay with the distance between the sites and the decay exponent is independent of \(\alpha\) in the gapless region while it depends on the system parameters in the gapped regime and  decays faster in this case than the gapless zone.  Additionally, we exhibited   the departures from the universal trend in the block entanglement entropy in presence of long-range DM interactions which could successfully capture the interplay between long-range and DM interactions.


This study investigated how dynamical  correlations change in time when the energy spectra of the initial and the evolving Hamiltonian are gapped and gapless, driven by  \(\alpha\) for a fixed strengths of DM interactions, and magnetic fields. 
The scaling of both classical correlations and mutual information in the  steady states reveals that the sharing of correlations is favorable when the evolving Hamiltonian is gapped.  Moreover, we found that the growth of entanglement entropy with time  is faster in the gapless regime than in the gapped phase and it saturates quickly when both the pre- and post-quench Hamiltonians are gapless.   On the one hand, these findings contribute to the understanding of fundamental aspects of quantum phase transitions, while on the other, sharability of correlations and entanglement  of the ground state as well as their dynamical patterns can be vital for establishing quantum links amongst quantum computers. 

\acknowledgments

We acknowledge the support from the Interdisciplinary Cyber Physical Systems (ICPS) program of the Department of Science and Technology (DST), India, Grant No.: DST/ICPS/QuST/Theme- 1/2019/23. We  acknowledge the use of \href{https://github.com/titaschanda/QIClib}{QIClib} -- a modern C++ library for general purpose quantum information processing and quantum computing (\url{https://titaschanda.github.io/QIClib}), \textcolor{black}{numerical computation of Pfaffians by M. Wimmer \cite{Wimmer2012Aug}} and the cluster computing facility at the Harish-Chandra Research Institute. This research was supported in part by the ``INFOSYS" scholarship for senior students. Funded by the European Union. Views and opinions expressed are however those of the author(s) only and do not necessarily reflect those of the European Union or the European Commission. Neither the European Union nor the granting authority can be held responsible for them.
This project has received funding from the European Union’s Horizon Europe research and innovation programme under grant agreement No 101080086 NeQST.

\bibliography{bibfile.bib}

\appendix

 \section{Correlation functions}
\label{pfaff_corre}

The two-point correlation functions are the fundamental constituents of our study. In the context of the ground state or the thermal state, the two-site correlation function between sites separated by a distance of $R$ can usually be expressed as a determinant of an $R \times R$ Toeplitz matrix \cite{lieb1961, barouch_pra_1970_1, barouch_pra_1970_2}.
However, as the Hamiltonian includes asymmetric interaction, the bipartite reduced state obtained from the \(N\)-party ground  state after tracing out \(N-2\) parties includes the correlators like $\mathcal{C}^{xy} \equiv \langle \sigma^x_i \sigma^y_{i+R}\rangle = \text{tr} (\sigma^x \otimes \sigma^y \rho_R) $ and $\mathcal{C}^{yx}$ along with  \(\mathcal{C}^{ii}\) (\(i=x,y,z\)).  Thus for the corresponding density matrix of the two distant parties of the spin chain and the time-evolved state, we need to deal directly with the Pfaffians in order to evaluate the correlations as a function of time as well as distance since there is a formation of chiral phase due to DM interaction. 

Using the Pfaffian formalism, we first write the spin correlation functions as 


\begin{widetext}
    \begin{eqnarray}
& \left .
\begin{array}{clllllcll} 
\mathcal{C}^{lm}_{i,i+R} = \langle \sigma_i^l \sigma_{i+R}^m \rangle= {c(l,m)} {\mathrm{pf}}\; \left | \right . I^{lm}_{1,2} & \dots  & I^{lm}_{1,R-1}  &J^{lm}_{1} & F^{lm}_{1} 
& G^{lm}_{1,2} &\phantom{c} .  &  \dots & G^{lm}_{1,r}  \\
           & \dots  & \dots &\dots &  \dots & \dots  & \phantom{c} . & \dots & \dots \\
           &        & I^{lm}_{R-2,R-1}  & J^{lm}_{R-2}         & F^{lm}_{R-2}         & G^{lm}_{R-2,2} & \phantom{c} .  & \dots & G^{lm}_{R-2,R}         \\
           &        &   & J^{lm}_{R-1}          & F^{lm}_{R-1} &  G^{lm}_{R-1,2} & \phantom{c} .& \dots &  G^{lm}_{R-1,R}  \\
           &        &   &           & E^{lm} &  D^{lm}_{2} & \phantom{c} . & \dots &  D^{lm}_{R}  \\
           &        &   &           &             & K^{lm}_{2} & \phantom{c} .&\dots &  K^{lm}_{R} \\  
           &        &   &           &             &                       & H^{lm}_{2,3}  &\dots &  H^{lm}_{2,R} \\  
           &        &   &           &             &                     & &\dots & \dots         \\  
           &        &   &           &             &                  &   &  & H^{lm}_{R-1,R}  

\end{array}
\right |,&
\label{pfaffian}
\end{eqnarray}

\end{widetext}

where $c(x,x)=c(y,y)=(-1)^{R(R+1)/2}$, 
\begin{eqnarray}
I^{xx}_{\mu,\nu}&&=  \langle A_{l+\mu}(t) A_{l+\nu}(t)\rangle, \nonumber \\
J^{xx}_{\mu},  &&=I^{xx}_{\mu,R}, \nonumber \\ 
H^{xx}_{\mu,\nu}&&= \langle B_{l+\mu-1}(t) B_{l+\nu-1}(t)\rangle, \nonumber \\
K^{xx}_{\nu}&&= H^{xx}_{1,\nu}, \\ \nonumber 
G^{xx}_{\mu,\nu}&&= \langle A_{l+\mu}(t) B_{l+\nu-1}(t)\rangle, \label{xx}\\ 
F^{xx}_{\mu}&&=G^{xx}_{\mu,1},  \nonumber \\
E^{xx}&&=G^{xx}_{R,1},  \nonumber \\
D^{xx}_{\nu}&&=G^{xx}_{R,\nu},  \nonumber 
\end{eqnarray}
\begin{eqnarray}
 I^{yy}_{\mu,\nu}&&=  \langle A_{l+\mu-1}(t) A_{l+\nu-1}(t)\rangle, \nonumber \\
J^{yy}_{\mu}  &&=I^{yy}_{\mu,R}, \nonumber \\ 
H^{yy}_{\mu,\nu}&&= \langle B_{l+\mu}(t) B_{l+\nu}(t)\rangle, \nonumber \\
K^{yy}_{\nu}&&= H^{yy}_{1,\nu}, \\ \nonumber 
G^{yy}_{\mu,\nu}&&= \langle A_{l+\mu-1}(t) B_{l+\nu}(t)\rangle, \label{yy} \\ 
F^{yy}_{\mu}&&=G^{yy}_{\mu,1},  \nonumber \\
E^{yy}&&=G^{yy}_{R,1},  \nonumber \\
D^{yy}_{\nu}&&=G^{yy}_{R,\nu},  \nonumber 
\end{eqnarray}
with $s(x,y)=s(y,x)= -i (-1)^{R(R-1)/2}$, we have
\begin{eqnarray}
I^{xy}_{\mu,\nu}&&=  \langle A_{l+\mu}(t) A_{l+\nu}(t)\rangle, \nonumber \\
G^{xy}_{\mu,\nu}&&= \langle A_{l+\mu}(t) B_{l+\nu}(t)\rangle, \nonumber  \\
J^{xy}_{\mu}  &&= G^{xy}_{\mu,0}, \label{xy} \\ 
F^{xy}_{\mu}&&=G^{xy}_{\mu,1},  \nonumber \\
H^{xy}_{\mu,\nu}&&= \langle B_{l+\mu}(t) B_{l+\nu}(t)\rangle, \nonumber \\
E^{xy}&&=H^{xy}_{0,1},  \nonumber \\
D^{xy}_{\nu}&&= H^{xy}_{0,\nu}, \nonumber \\  
K^{xy}_{\nu}&&=H^{xy}_{1,\nu},  \nonumber 
\end{eqnarray}
and
\begin{eqnarray}
I^{yx}_{\mu,\nu}&&=  \langle A_{l+\mu-1}(t) A_{l+\nu-1}(t)\rangle, \nonumber \\
G^{yx}_{\mu,\nu}&&= \langle A_{l+\mu-1}(t) B_{l+\nu-1}(t)\rangle, \nonumber  \\
J^{yx}_{\mu}  &&= I^{yx}_{\mu,R}, \label{yx} \\ 
F^{yx}_{\mu}&&=I^{yx}_{\mu,R+1},  \nonumber \\
E^{yx}&&=I^{yx}_{r,r+1},  \nonumber \\
D^{yx}_{\nu}&&= G^{yx}_{R,\nu},  \nonumber \\
K^{yx}_{\nu}&&=G^{yx}_{R+1,\nu}, \nonumber  \\ 
H^{yx}_{\mu,\nu}&&= \langle B_{l+\mu-1}(t) B_{l+\nu-1}(t)\rangle. \nonumber 
\end{eqnarray}
Each of the elements in the Pfaffian can be constructed from the expectation values of one of the operators, $A_l A_{l+R}$, $B_l B_{l+R}$ and $A_l B_{l+R}$ with
${A}_{i}=c_{i}^{\dagger}+c_{i}, \quad {B}_{i}=c_{i}^{\dagger}-c_{i}.$
%

\begin{widetext}

When recasted in the same Fourier basis as taken while diagonalization, the operators for the $k^\text{th}$ momentum have the matrix form
\begin{align}
 (A_l A_{l+R})_k =  \left[\begin{array}{cccc}
\cos(kR) & -i\sin(kR) & 0 & 0 \\
-i\sin(kR) & \cos(kR) & 0 & 0 \\
0&0&\cos(kR)-i\sin(kR)&0\\
0&0&0&\cos(kR)+i\sin(kR)
\end{array}\right],
\end{align}
\begin{align}
 (B_l B_{l+R})_k =   \left[\begin{array}{cccc}
-\cos(kR) & -i\sin(kR) & 0 & 0 \\
-i\sin(kR) & -\cos(kR) & 0 & 0 \\
0&0&-\cos(kR)+i\sin(kR)&0\\
0&0&0&-\cos(kR)-i\sin(kR)
\end{array}\right],
\end{align}

\begin{align}
 (A_l B_{l+R})_k =   \left[\begin{array}{cccc}
\cos(kR) & i\sin(kR) & 0 & 0 \\
-i\sin(kR) & -\cos(kR) & 0 & 0 \\
0&0&0&0\\
0&0&0&0
\end{array}\right].
\end{align} 
The single-site transverse magnetization in the same Fourier basis is given by
\begin{eqnarray}
    \sigma_z^k = \left[\begin{array}{cccc}
-1 & 0 & 0 & 0 \\
0 & 1 & 0 & 0 \\
0&0&0&0\\
0&0&0&0
\end{array}\right]. 
\end{eqnarray}

\end{widetext}
The corresponding expectation value of the operators for each value of $k$, say $O^k$, with respect to the time-evolved  state is computed as $  \langle O \rangle = \sum_{k=1}^{{N/2}}\text{Tr}(\rho_\beta^k(t) O^k).$
Once we find all the single-site magnetizations $m_j^\alpha ~\forall~ \alpha=\{x,y,z\}$ at $j=i$ and $j=i+R$ and all possible two-site correlation functions $\mathcal{C}_{i,i+R}^{l,m} ~\forall~ l,m=\{x,y,z\}$  from the Pfaffians described above, we can construct the two-site reduced density matrix between sites \(i\) and \(I+R\) as we know any two party density matrix for the given Hamiltonian is given as

\begin{equation}
    \rho_{ij}=\frac{1}{4} \left (\mathbb{I}_4 + m^z({\sigma^z_j}+{\sigma^z_i}) + \sum_{k,l} \mathcal{C}^{kl}_{ij} \sigma^{k}_i \otimes \sigma^{l}_j\right),
\end{equation}
where \(k,l\in\{x,y\}\).


\end{document}